\newcommand{\ceil}[1]{\lceil #1 \rceil}
\newcommand{\floor}[1]{\lfloor #1 \rfloor}
\newtheorem{lemma}{Lemma}
\newtheorem{conjecture}{Conjecture}
\title{3-Colorable Delaunay Triangulations}
\author{Lucas Moutinho Bueno \\\small{Institute of Mathematics and Computer Sciences}\\\small{University of S\~ao Paulo - S\~ao Carlos - Brazil}}
\begin{document}

\maketitle

\section{Introduction}

\subsection{Objective}

The input of the problem we are trying to solve is a set $X$ of $n$ two-dimensional points.
The output is a 3-colorable two-dimensional Delaunay triangulation $T$ for $X \cup Y$,
where $Y$ is a set of $m$ new points. We want to $m$ be as few as possible. 

\subsection{Motivation}

Delaunay triangulations are popular triangulations that maximize the minimum angle of all the interior angles of the triangles.
This property is desirable for some geometric operations such as interpolation or rasterization.

The adjacency of the triangles in a triangulation might be represented by a data structure.
We are particularly interested in using the GEM data structure \cite{GEM1,GEM2} for this purpose.

This structure is compact and has some operational advantages over others. It is also generalizable to dimensions higher than two. However, it can only
represent 3-colorable triangulations , i.e., the triangulations that one can assign one of colors (labels) from $\{0,1,2\}$ to each vertex as long as 
any two adjacent vertices have different colors.

The GEM data structure represents each triangle $t$ by a record with pointer fields $t.p[i]$, $i \in \{0,1,2\}$, corresponding to three colors. The field
$t.p[i]$ points to the record of the triangle $t'$ that is adjacent to $t$ across the edge $e$ opposite to the vertex of color $i$.

Unlike other triangulation data structures, there is no need to identify which side of $t'$ is the edge $e$, or to perform runtime checks to obtain that information
(e.g: from vertex pointers): in the GEM structure $t'.p[i]$ points back to $t$.

\subsection{Definitions}

A \emph{pseudo-triangulation} (on the plane) is a partition of a compact connected space $\mathcal{S} \subset \mathbb{R}^2$ into sets of vertices ($\mathcal{V}$), edges ($\mathcal{E}$) and triangles ($\mathcal{T}$).

A \emph{triangulation} (on the plane) is a pseudo-triangulation, such that:

\begin{itemize}

	\item Every edge is an open line segment of $S$;
	\item Every vertex is a point of $S$;
	\item The endpoints of every edge are vertices of the triangulation;
	\item Every vertex is an endpoint of some edges;
	\item Every triangle is bounded by a cycle of three edges and vertices;
	\item Every edge is on the boundary of one or two triangles;
	\item $\bigcup \mathcal{V}\mathcal{E}\mathcal{T} = \mathcal{S}$;
	\item $\bigcap \mathcal{V}\mathcal{E}\mathcal{T} = \emptyset$.

\end{itemize}

An edge that is in the boundary of two triangles is an \emph{interior edge}, otherwise it is a \emph{border edge}.

If all edges with a common endpoint $v$ are interior edges and $v$ is the endpoint of at least one edge, then we say that $v$ is an \emph{interior vertex}, otherwise $v$ is a \emph{border vertex}.

A  \emph{Delaunay triangulation} (DT) is a triangulation where $\mathcal{S}$ is the convex hull of $\mathcal{V}$ and where no vertex is inside the circumcircle of any triangle. 
If there is no subset of $\mathcal{V}$  with more than three co-circular vertices, then the DT is unique.

An \emph{even Delaunay triangulation} (EDT) is a triangulation where every interior vertex is even (every interior vertex is the endpoint of an even number of edges).

A \emph{locally Delaunay} edge is an edge that is either a border edge or there is no vertex inside the circumcircles of both its incident triangles.

An \emph{incomplete Delaunay triangulation} (IDT) is a pseudo-triangulation where every edge is locally Delaunay.

If $C$ is a cycle with four edges and four vertices  and $e$ is the edge that splits $C$ into two triangles, then the \emph{flip edge} of $e$ is an edge that would have the endpoints of $C$ different from $e$.

Given a vertex $v$ of a triangulation, $v(x)$ is the horizontal coordinate of $v$ in the Euclidian plane and $v(y)$ is its vertical coordinate.

\section{Algorithm for 3-colorable Delaunay triangulations}

It is known that a two-dimensional Delaunay triangulation is 3-colorable if, and only if, all its interior vertices are even~\cite{3-color}. Therefore constructing a 3-colorable two-dimensional Delaunay triangulation is
the same as constructing an EDT.

\subsection{Review of the Divide and Conquer algorithm for DTs}

Below, we describe the Divide and Conquer Algorithm (DQA) for two-dimensional DT from Guibas and Stolfi~\cite{divide}. This algorithm is
the base to construct an EDT and it runs in $O(n\ log(n))$ time.

The input of the algorithm is a set $X$ of $n$ two-dimensional points. Its output is a Delaunay triangulation $T$, where $ \mathcal{V} = X$.

\paragraph{$DQA (X)$:}

\begin{enumerate}

	\item $S_X \leftarrow Sort(X)$

	\item $\{T_1, \ldots T_{\ceil{\frac{n}{3}}}\} \leftarrow Split(S_X)$

	\item $T \leftarrow Merge(\{T_1, \ldots T_{\ceil{\frac{n}{3}}}\}, \ceil{\frac{n}{3}})$. Return $T$

\end{enumerate}

The procedure $Sort(X)$ sorts the points of $X$ into a sequence $S_X$ of vertices in ascending order of the horizontal coordinate.

The procedure $Split(S_X)$ splits the sequence of vertices $S_X$ into $\floor{\frac{n}{3}}$ groups of three consecutive vertices each and one group of two vertices if $mod(n,3) = 2$ or a single vertex if $mod(n,3) = 1$.
For each group of three vertices, a triangle is created while connecting the vertices with edges. If there is a group of two vertices, an edge is created between them. Each group $i$ of vertices, together with its new triangle and edges,
is stored as the pseudo-triangulation $T_i$, for every $i = 1  \ldots \ceil{\frac{n}{3}}$. Note that $T_i$ is always an IDT. In particular, $T_i$ is a DT for $i = 1  \ldots \floor{\frac{n}{3}}$.

The procedure $Merge(T_1, \ldots, T_k, k)$ merges $k$ IDTs ($T_1$ to $T_k$) into a single DT.
The steps of the procedure are described below.
It uses the following auxiliaries: three IDTs $T_L$ (left triangulation), $T_R$ (right triangulation) and $T'$,
five vertices $v_L$, $v_R$, $v_{LH}$, $v_{RH}$ and $w$, and an edge $e_{LR}$.

\paragraph{$Merge(T_1, \ldots, T_k, k)$:}

\begin{enumerate}

	\item if $k = 1$ then return $T_1$ \label{st.test1}

	\item if $k > 2$ then:  \label{st.test2}

	\begin{enumerate}

		\item $T_L \leftarrow Merge(T_1, \ldots, T_{\floor{\frac{k}{2}}}, \floor{\frac{k}{2}})$

		\item $T_R \leftarrow Merge(T_{\floor{\frac{k}{2}}+1}, \ldots, T_k, \ceil{\frac{k}{2}})$

	\end{enumerate}

	\item Set $T' \leftarrow \emptyset$. Let $v_L$ and $v_R$ be the vertices of $T_L$ and $T_R$ with the lowest vertical coordinate, respectively. 
	Let $v_{LH}$ and $v_{RH}$ be the vertices of $T_L$ and $T_R$ with the highest vertical coordinate, respectively. \label{st.firstset}

	\item Create the edge $e_{LR}$ with endpoints $v_L$ and $v_R$ and add $e_{LR}$ into $T'$ \label{st.firstedge}

	\item while $v_L \neq v_{LH}$ and $v_R \neq v_{RH}$, repeat:  \label{st.loop}

	\begin{enumerate}

		\item find the vertex $w$ from $T_L \cup T_R$ such that there is no vertex from $T_L$ or $T_R$ inside the circle formed by $w$, $v_L$ and $v_R$ \label{st.findw}

		\item if $w \in T_L$ then $v_L \leftarrow w$ else $v_R \leftarrow w$

		\item create the edge $e_{LR}$ with endpoints $v_L$ and $v_R$, that will form a new triangle $t$; add $e_{LR}$ and $t$ into $T'$

		\item remove from $T_L$ and $T_R$ all edges intersecting $e_{LR}$, together with their incident triangles \label{st.endloop}

	\end{enumerate}

	\item Return the triangulation formed by $T_L \cup T_R \cup T'$ \label{st.final}
	
\end{enumerate}

The proof of correctness of the DQA, in particular the existence and uniqueness of the vertex $w$ of Step~\ref{st.findw}, can be found in~\cite{divide}.
It is also important to note that during the execution of the Merge procedure,  $T_L \cup T_R \cup T'$ is an IDT.

See Figure~\ref{fig:delaunay_ex} for an example of execution of the Merge procedure.

\begin{figure}[h]
\centering
\includegraphics[width=360pt]{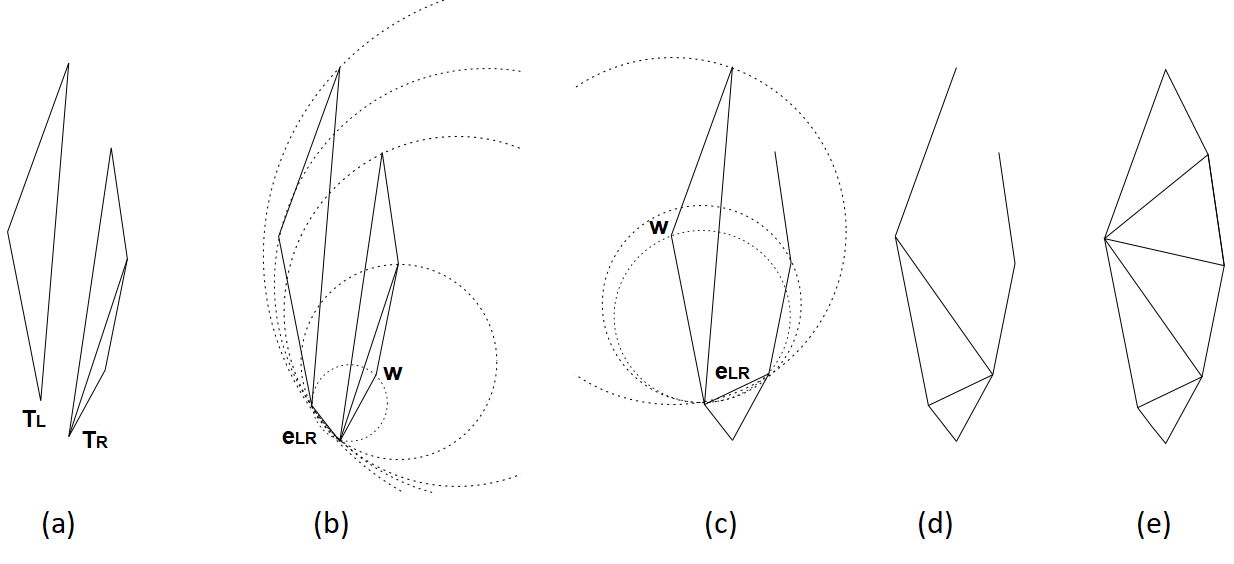}
\caption{Example of execution of the Merge procedure from the DQA. 
It starts with the triangulations $T_L$ and $T_R$ (a); it adds the edge $e_{LR}$ and it searches for the vertex $w$ to create the next edge (b)(c); it removes some edges along the way (c)(d); it returns a DT (e).}
\label{fig:delaunay_ex}
\end{figure}

\subsection{Making the Triangulation Even}

In order to make an EDT, we modify the Merge procedure with the following steps, that are executed just after Step~\ref{st.findw}.

\begin{enumerate}

		\item[\ref{st.findw}']  if $w \in T_L$ and $v_L$ is odd or if $w \in T_R$ and $v_R$ is odd, then:
				 
		\begin{enumerate}
		
			\item[\ref{st.findw}'1] create three edges $e_w$, $e_L$ and $e_R$ and a vertex $u$ as a common endpoint of these edges.
				 Set $w$, $v_L$ and $v_R$ as the second endpoint of $e_w$, $e_L$ and $e_R$, 
				 respectively. Add $u$, $e_w$, $e_L$ and $e_R$ into $T'$. Two new triangles will be created and also added into $T'$. 
				 The coordinates of $u$ must be chosen in such a way that $T_L \cup T_R \cup T'$ is an IDT.

			\item[\ref{st.findw}'2] Let $u_{R}$ be the vertex of $T_R$ with the lowest horizontal coordinate. 
				If $u(x) < u_{R}(x)$ then $T_L \leftarrow Merge\_u(T_L,u,e_L)$ and $v_L \leftarrow u$,
				else $T_R \leftarrow Merge\_u(u,T_R,e_R)$ and $v_R \leftarrow u$.
				
			\item[\ref{st.findw}'3] Continue from Step~\ref{st.loop}.
		
		\end{enumerate}

\end{enumerate}

We will call the new modified procedure as Merge' and the new modified algorithm as DQA'. 

The procedure $Merge\_u$ is similar to the procedure $Merge'$, except for the following items:

\begin{itemize}

\item The parameter $k$ of $Merge'$ is replaced by an edge $e$ in $Merge\_u$.

\item Steps~\ref{st.test1} and \ref{st.test2} of $Merge'$ are skipped in $Merge\_u$.

\item  In Step~\ref{st.firstset} of $Merge\_u$, $v_L$ and $v_R$ are set to be the endpoints of the edge $e$ (one of them will be the vertex $u$, passed as parameter).

\item The edge $e$  won't be created again in Step~\ref{st.firstedge} but will be included into $T'$ normally.

\end{itemize}

In practice, one can see that when $Merge\_u(T_L,u,e_L)$ is executed, $u = T_R$. Therefore, $u = v_R = v_{HR}$ during the execution of the loop (Step~\ref{st.loop}) and the vertex $w$ will always be chosen from the vertices of $T_L$ at Step~\ref{st.findw}.
Analogously, one can see a similar execution pattern for $Merge\_u(u,T_R,e_R)$, by changing $T_R$ by $T_L$ and other related variables.

See Figure~\ref{fig:delaunay_e_ex} for an example of execution of DQA'. 

\begin{figure}[hp]
\centering
\includegraphics[width=300pt]{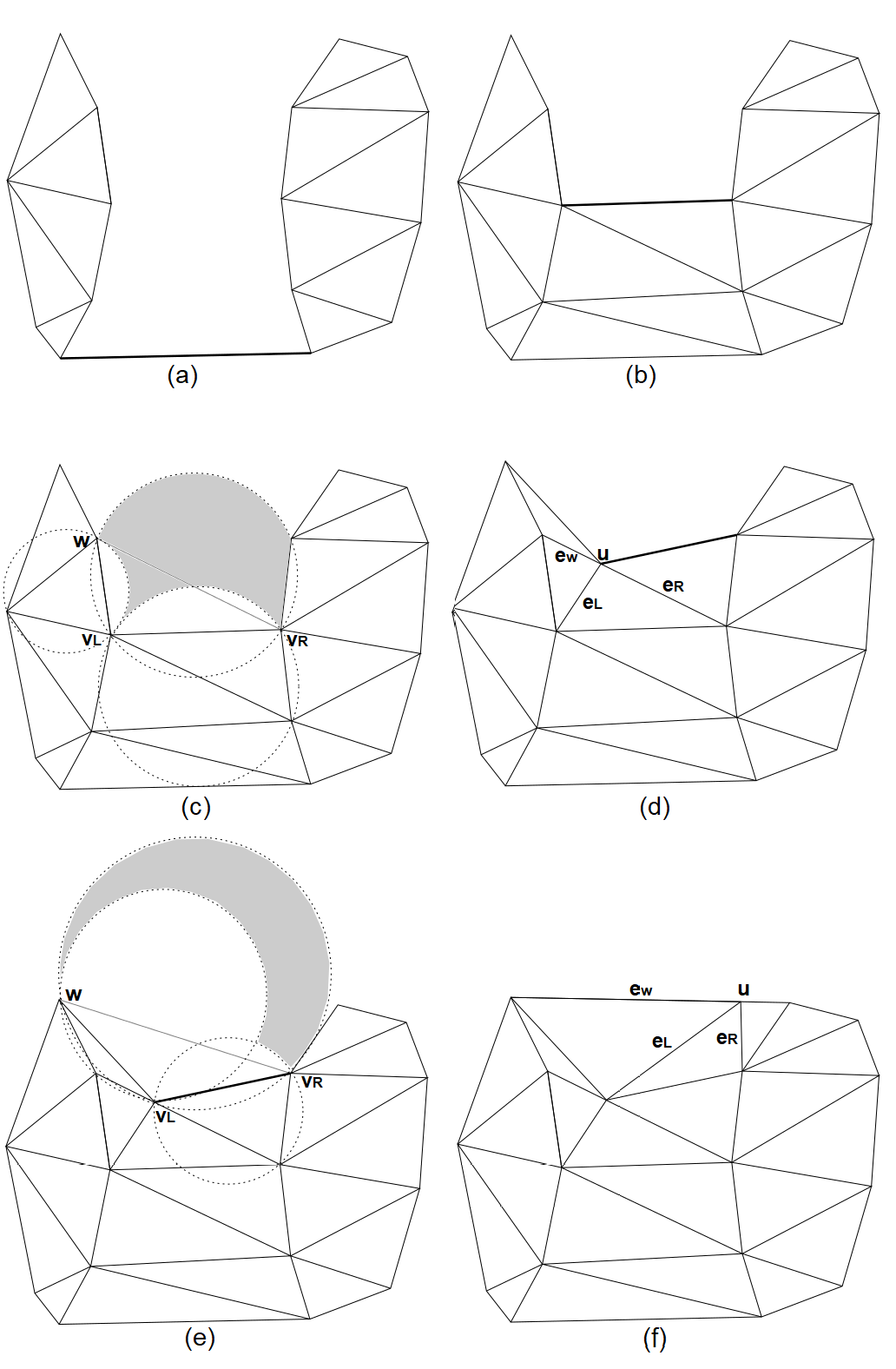}
\caption{Example of execution of the Merge' procedure from the DQA'. The edge $e_{LR}$ is drawn thicker.
Merge' is executed as Merge (a)(b) until $v_L$ is odd (c)(e). The vertex $u$ can be created anywhere in the shaded area. 
The Steps~\ref{st.findw}'1 and~\ref{st.findw}'2 are executed as well as one additional iteration of  Step~\ref{st.loop} (d)(f).}
\label{fig:delaunay_e_ex}
\end{figure}

\section{Correctness}

First of all, note that the vertex  $u_{R}$ of Step~\ref{st.findw}'2 of $Merge'$ and $Merge\_u$ serves to ensure that, after execution of $Merge\_u$, all vertices of $T_L$ will have a lower or the same horizontal coordinate than any vertex from $T_R$,
preserving a property needed to $Merge$  function properly, as well as $Merge'$ and $Merge\_u$.

During the execution of DQA' and, in particular, during the execution of $Merge'$ and $Merge\_u$, one can see that every edge added into $T'$ can only be added if $T_L \cup T_R \cup T'$ is an IDT.
This is true even for the edge added during Step~\ref{st.firstedge}, since it is a border edge and its addition into $T'$ won't create any interior edges in  $T_L \cup T_R \cup T'$. 
Therefore, assuming that the algorithm is correct, $T_L \cup T_R \cup T'$ must always be an IDT.

One can also see that all vertices from the input of DQA' start as border vertices and all vertices created at Step~\ref{st.findw}'1 of $Merge'$ and $Merge\_u$ are created as border vertices. 
The Step~\ref{st.findw} of $Merge'$ and $Merge\_u$ is the only step that turns a border vertex into an interior vertex and Step~\ref{st.findw}'1  prevents that vertex to be odd.
Therefore, assuming that the algorithm is correct, the interior vertices of $T_L \cup T_R \cup T'$ must always be odd.

Finally, note that the conditions for ending the loop at Step~\ref{st.loop} are the same for $Merge$, $Merge'$ and $Merge\_u$. Those conditions turn $T_L \cup T_R \cup T'$ into a triangulation of a convex hull and therefore turn an IDT into a DT.
Since the interior vertices of $T_L \cup T_R \cup T'$ must always be odd, assuming that the algorithm is correct,  the final triangulation must be an EDT.

However, we still need to guarantee the existence of the vertex $u$ created at Step~\ref{st.findw}'1 of $Merge'$ and $Merge\_u$  and we still need to guarantee that the creation of new vertices won't cause the loop at Step~\ref{st.loop} to be endless.
The lemma below proves half of what we still need. 

\begin{lemma}

During the execution of Step~\ref{st.findw}'1 of $Merge'$ or $Merge\_u$, it is always possible to choose coordinates for $u$ in such a way that $T_L \cup T_R \cup T'$ is an IDT.

\end{lemma}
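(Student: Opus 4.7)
The plan is to rewrite the IDT requirement as a small number of open in-circle conditions on the position of $u$ and then invoke the correctness of the unmodified Merge to exhibit a point that satisfies them all.

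Assume $w \in T_L$; the case $w \in T_R$ is symmetric. By the standard invariants underlying the correctness of Merge, $w$ is a neighbor of $v_L$ in the current $T_L$, so the edge $v_L w$ is already present. The two new triangles created by Step~\ref{st.findw}'1 are therefore $\Delta_1 = u v_L v_R$ and $\Delta_2 = u v_L w$, and the only edges whose local Delaunay property can be destroyed by inserting $u$ are the three new edges $e_L, e_R, e_w$ together with the pre-existing edges $v_L v_R$ and $v_L w$. Since $e_R = u v_R$ and $e_w = u w$ stay on the boundary of $T_L \cup T_R \cup T'$, they are trivially locally Delaunay. Each of the three remaining edges contributes a single in-circle test, which by the classical equivalence ``$D \in \mathrm{circ}(A,B,C) \iff C \in \mathrm{circ}(A,B,D)$'' for convex quadrilaterals $ABCD$ reduces to a single disk condition on $u$:
\begin{itemize}
\item $u$ is strictly outside $\mathrm{circ}(\Delta'_{LR})$, where $\Delta'_{LR}$ is the triangle of the current triangulation immediately below $e_{LR}$ (vacuous if $e_{LR}$ is still a border edge);
\item $u$ is strictly outside $\mathrm{circ}(\Delta_{\mathrm{other}})$, where $\Delta_{\mathrm{other}}$ is the triangle of $T_L$ on the opposite side of $v_L w$ (vacuous if $v_L w$ is a border edge of $T_L$);
\item $u$ is strictly inside $\mathrm{circ}(w, v_L, v_R)$.
\end{itemize}
In addition, $u$ must lie in the interior of the triangle $\Delta := w v_L v_R$ so that the three new edges cross no existing edge; any such $u$ automatically lies inside $\mathrm{circ}(w, v_L, v_R)$, so the third condition above is free.

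Next I would extract, from the correctness of the unmodified Merge applied to the triangle $w v_L v_R$ that it would add in place of the pair $\Delta_1, \Delta_2$, the two strict inequalities $w \notin \overline{\mathrm{disk}(\Delta'_{LR})}$ and $v_R \notin \overline{\mathrm{disk}(\Delta_{\mathrm{other}})}$; these are the in-circle translations of the fact that the original Merge preserves the local Delaunay property at $v_L v_R$ and $v_L w$. One also has $w \in \partial\,\mathrm{disk}(\Delta_{\mathrm{other}})$ and $v_R \in \partial\,\mathrm{disk}(\Delta'_{LR})$, since $w$ and $v_R$ are themselves vertices of $\Delta_{\mathrm{other}}$ and $\Delta'_{LR}$ respectively, so neither vertex of $\Delta$ individually works as a location for $u$.

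The main obstacle is to combine the two strict facts above into the existence of a common escape point in the interior of $\Delta$. Let $K_1 := \overline{\mathrm{disk}(\Delta'_{LR})} \cap \overline{\Delta}$ and $K_2 := \overline{\mathrm{disk}(\Delta_{\mathrm{other}})} \cap \overline{\Delta}$: each $K_i$ is a compact lens whose base lies on one edge of $\Delta$ (the chord $v_L v_R$ for $K_1$ and $v_L w$ for $K_2$) and which, by the strict facts above combined with compactness, stays bounded away from the vertex of $\Delta$ opposite to that chord. Because $K_1$ and $K_2$ have their bases on the two edges of $\Delta$ that share the vertex $v_L$, their union cannot separate small open neighborhoods of $w$ and of $v_R$ inside the interior of $\Delta$. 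A slight inward perturbation of the midpoint of the edge $w v_R$ lies in the portion of the interior of $\Delta$ that is simultaneously away from both chords $v_L v_R$ and $v_L w$, and hence outside both $K_1$ and $K_2$; such a point gives valid coordinates for $u$.
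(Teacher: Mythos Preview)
Your reduction of the IDT requirement to a short list of disk constraints on $u$ is sound, and you correctly identify that the only nontrivial constraints are ``$u$ outside $\mathrm{disk}(\Delta'_{LR})$'' and ``$u$ outside $\mathrm{disk}(\Delta_{\mathrm{other}})$'' once $u$ is confined to $\mathrm{int}(\Delta)$. The gap is in your last paragraph. From the two strict facts $w\notin\overline{\mathrm{disk}(\Delta'_{LR})}$ and $v_R\notin\overline{\mathrm{disk}(\Delta_{\mathrm{other}})}$ alone it does \emph{not} follow that $\mathrm{int}(\Delta)\setminus(K_1\cup K_2)\neq\emptyset$, and in particular your proposed witness --- a slight inward perturbation of the midpoint of $wv_R$ --- need not work. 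If both circumcircles are taken only slightly smaller than $\mathrm{circ}(\Delta)$, then $K_1$ covers all of $\overline\Delta$ except a tiny cap near $w$ and $K_2$ covers all of $\overline\Delta$ except a tiny cap near $v_R$, and these two caps can be disjoint. Concretely, with $v_L=(0,0)$, $v_R=(2,0)$, $w=(1,2)$, the disks centred at $(1,0.7)$ and $(0.9,0.8)$ (through $\{v_L,v_R\}$ and $\{v_L,w\}$ respectively) satisfy both of your strict exclusions yet together cover every interior point of $\Delta$; the midpoint $(1.5,1)$ of $wv_R$ lies well inside both. ``Away from both chords $v_Lv_R$ and $v_Lw$'' says nothing about being outside the disks, which can bulge almost all the way across $\Delta$.

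The paper's argument uses a different mechanism. It places $u$ on an arc $A_{LR}$ of the circumcircle $C_{LR}$ of $\Delta$, and its obstruction circles $C_L,C_R$ are both taken to pass through $w$ as well as through one of $v_L,v_R$. Since two distinct circles meet in at most two points, each obstruction circle meets $C_{LR}$ exactly in $\{v_L,w\}$ or $\{v_R,w\}$; the arc $A_{LR}$ avoids $w$, so it can touch each obstruction circle only at a single endpoint, forcing its interior to lie outside both. That ``two shared points with $C_{LR}$'' structure is what carries the proof, and your argument discards it in favour of the weaker one-sided exclusions. Note also that your obstruction disk $\mathrm{disk}(\Delta'_{LR})$ does \emph{not} pass through $w$, so the paper's arc trick does not directly dispose of your $K_1$; before repairing the existence argument you should reconcile which edges genuinely need to be re-checked after Step~\ref{st.findw}'1, since you and the paper are not testing the same set.
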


\begin{proof}

To prove this lemma, we need to guarantee that there are coordinates for $u$ where the new interior edge $e_w$ will be locally Delaunay and the former border edges  $e_L$ and $e_R$ will remain locally Delaunay after the execution of  Step~\ref{st.findw}'1.

The new edges $e_L$ and $e_R$ will be locally Delaunay because they will be on the border of  $T_L \cup T_R \cup T'$.
The remaing edges that are not $e_w$ don't need to be tested because they were already locally Delaunay and won't change with Step~\ref{st.findw}'1.

 Let $e_L'$ be the edge of $T_L$ with endpoints $w$ and $v_L$, and let $v_L'$ be the vertex of $T_L$ opposite from $e_L'$. Analogously,
 let $e_R'$ be the edge of $T_R$ with endpoints $w$ and $v_R$, and let $v_R'$ be the vertex of $T_R$ opposite from $e_R'$. 
 We also define $C_L$ as the cycle formed by the vertices $v_L$, $v_w$ and $v_L'$; $C_R$ as the cycle formed by the vertices $v_R$, $v_w$ and $v_R'$; 
 and $C_{LR}$ as the cycle formed by $v_L$, $v_R$ and $e_{LR}$.

 We must chose coordinates for $u$ in such a way that they won't be inside $C_L$ nor $C_R$.
 Since we want the edge $e_w$ to be in the final triangulation instead of the edge $e_{LR}$, then $u$ must be inside $C_{LR}$ or part of it.

We know that $T_L \cup T_R \cup T' \cup e_{LR}$ is an IDT.  Therefore, $v_L$ is either outside or a point of $C_R$. Likewise, $v_R$ is either outside or a point of $C_L$. 

 Let $A_{LR}$ be the arc of $C_{LR}$ from $v_L$ to $v_R$.
 If $v_L$ is a point of $C_R$, then $v_L$, $v_R$, $v_w$ and $v_R'$ are cocircular and $C_R = C_{LR}$. Likewise, if $v_L$ is a point of $C_L$, then $C_L = C_{LR}$.
 On the other hand, if $v_L$ is outside $C_R$, then the points of $A_{LR}$ are outside $C_R$, except for $v_R$.
 Likewise, if $v_R$ is outside $C_L$, then the points of $A_{LR}$ are outside $C_L$, except for $v_L$.
 In any case, since the length of $A_{LR}$ is greater than zero, there is a point of $A_{LR}$, different from $v_L$ and $v_R$, that is either outside or part of $C_R$ and/or $C_L$.
 Since that point is part of $C_{LR}$ it can be chosen for $u$ in such a way that $T_L \cup T_R \cup T'$ remains an IDT after the execution of Step~\ref{st.findw}'1.

\end{proof}

%% e se A_{LR} interceptar $T_L$ ou $T_R$ ???????

Note that the vertex $u$ does not need to be a point of $A_{LR}$ (as assumed in the proof of the lemma above). It can be any point inside $C_{LR}$ that is also outside or part of $C_L$ and $C_R$.\\

Now we need to prove that $DQA'$ ends. For this purpose, we need to show that $Merge\_u$ is called a finite number of times for each new vertex $u$ created by $Merge'$.
We also need to show that, for each new vertex created by $Merge'$, the loop at Step~\ref{st.loop} won't be called again at least some of the times for any instance of the problem.

For the first part, we note that since $Merge\_u(T_L,u,e_L)$ will always have $u=T_R$, then any odd vertex tested at Step~\ref{st.findw}' will be part of $T_L$. 
Also, that vertex will never be tested again during a recursion of $Merge\_u$, shrinking the number possible vertices to be tested at each recursive call of $Merge\_u$, ending the recursion eventually.

Unfortunately, for the second part, we couldn't prove that the vertices created by $Merge'$ will always prevent the loop to be called a finite number of times. On the other hand, we didn't find any instance where the loop is endless. We then created the following conjecture.

\begin{conjecture}

 During the execution of $Merge'$, Step~\ref{st.findw}'1 is called a finite number of times.
 
\end{conjecture}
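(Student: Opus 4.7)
The plan is to bound the total number of calls to Step~\ref{st.findw}'1 inside a single invocation of $Merge'$ by combining an inductive reduction on the recursion depth with a monotone geometric potential. Since each top-level $Merge'$ calls itself on strictly smaller inputs (Steps~\ref{st.test2}a, \ref{st.test2}b), an induction on $k$ reduces the problem to bounding Step~\ref{st.findw}'1 calls inside the main loop (Step~\ref{st.loop}) of a single invocation whose $T_L$ and $T_R$ are fixed at entry. Termination of each $Merge\_u$ subroutine has already been argued in the paragraph preceding the conjecture, so a finite bound on Step~\ref{st.findw}'1 calls in this loop will propagate upward through the whole execution.

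Next I would fix a canonical rule for choosing the coordinates of $u$. The lemma shows the admissible region for $u$ is two-dimensional, so there is freedom; the remark after the lemma emphasises that $u$ need not lie on $A_{LR}$. A natural canonical choice is the intersection of $A_{LR}$ with the perpendicular bisector of $v_Lv_R$, resolving degeneracies by a small symbolic perturbation. With this choice, $u$ is determined by the current geometry, and in particular its distance from the chord $v_Lv_R$ is controlled below in terms of the radius of $C_{LR}$.

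The potential I would track is the area $\Phi$ of the region enclosed by the current edge $e_{LR}$, the upper boundary chain of $T_L$ from $v_L$ to $v_{LH}$, the segment $v_{LH}v_{RH}$, and the upper boundary chain of $T_R$ from $v_{RH}$ back to $v_R$. This is exactly the portion of the eventual merged triangulation that has not yet been filled in. Every ordinary iteration of Step~\ref{st.loop} removes at least one triangle from this region, so $\Phi$ strictly decreases. An invocation of Step~\ref{st.findw}'1 replaces the chord $v_Lv_R$ by the two segments $v_Lu$, $uv_R$ before the loop continues; with the canonical choice of $u$ above, the two triangles it carves off, together with the progress made by the subsequent ordinary iteration, should consume an area bounded below by a positive fraction of the area of the triangle $v_Lwv_R$.

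The principal obstacle is proving a uniform lower bound on the area removed per call to Step~\ref{st.findw}'1. The triangles $v_Lwv_R$ arising during the execution can become arbitrarily thin, and the radius of $C_{LR}$ can shrink as $v_L, v_R$ approach $v_{LH}, v_{RH}$, so the area carved off at successive calls could in principle form a convergent series, leaving $\Phi$ positive in the limit. Circumventing this seems to require a combinatorial invariant on top of the geometric one, for example by charging each new vertex $u$ against the odd vertex ($v_L$ or $v_R$) that triggered its creation and showing that no vertex is charged more than a bounded number of times. Controlling the interaction between a freshly created $u$ and its potential future role as an odd endpoint in a later iteration is, I expect, where the author's attempts broke down, and is the principal technical obstacle to converting the conjecture into a theorem.
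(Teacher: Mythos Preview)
The paper does not prove this statement: it is explicitly labelled a \emph{conjecture}, and the surrounding text says ``we couldn't prove that the vertices created by $Merge'$ will always prevent the loop to be called a finite number of times.'' There is therefore no paper proof to compare against; the only question is whether your proposal actually establishes the claim.

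It does not, and you say so yourself. Your potential $\Phi$ is monotone non-increasing, but you need a \emph{uniform positive lower bound} on the decrement at each Step~\ref{st.findw}'1 call for monotonicity to imply finiteness. You correctly observe that the triangles $v_Lwv_R$ can become arbitrarily thin and the radius of $C_{LR}$ can shrink, so the decrements could form a convergent series while the loop runs infinitely often. Your fallback idea --- charging each new $u$ to the odd vertex that triggered it --- immediately hits the problem that $u$ itself can later become an odd $v_L$ or $v_R$ and trigger the creation of another new vertex, so the charging scheme has no obvious bounded fan-in. Your final paragraph concedes exactly this.

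In short, what you have written is a reasonable outline of \emph{why the conjecture is hard}, aligned with the paper's own admission, rather than a proof. The gap you name (no lower bound on per-step progress, and no control on newly created vertices re-entering the parity test) is precisely the missing idea; nothing earlier in your argument closes it.
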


If the conjecture above is proved then we can conclude that the DQA' always works. Otherwise the DQA' works for the instances that we have seen.

\section{Final remarks}

As future work we intend to prove the conjecture of this paper by finding an upper bound for the number of new vertices created by DQA', or disprove it by finding an instance where the DQA' doesn't stop.

If we don't succeed to prove an upper bound for DQA' or if the bound achieved grows exponentially on the number of input vertices, we could develop an algorithm that creates an approximated 3-colored Delaunay triangulation.
We could start by changing the condition at Step~\ref{st.findw}' of $Merge'$ to the following:\newline

if $w \in T_L$ and $v_L$ is odd and $v_L \in X$ or if $w \in T_R$ and $v_R$ is odd and $v_R \in X$, then:\newline

\noindent where $X$ is the set of the input vertices of  DQA'. In this case the DQA' would execute Steps~\ref{st.findw}'1 and~\ref{st.findw}'2 $O(|X|)$ times, creating at most $|X|$ vertices.
Despite the output triangulation being Delaunay, some interior vertices could be odd at the end of the execution.
In order to make all interior vertices even and therefore making the triangulation 3-colorable we could, for a start,
use the output of the DQA' with the modified condition above as the input of the algorithm of Bueno and Stolfi~\cite{BnS} that subdivides any triangulation into a 3-colored triangulation.
The algorithm of Bueno and Stolfi may bisect some triangles so that the smallest angle of the final triangulation would be grater or equal to half the smallest angle of the input Delaunay triangulation.
However, we hope that we will find a better solution for this problem in the future.

I would like to thank Professor Antonio Castelo for reviewing this paper.

\end{document}